\newcommand{\bra}[1]{\langle #1|}
\newcommand{\ket}[1]{|#1\rangle}
\newcommand{\cent}[0]{\mbox{\textcent}}
\newcommand{\dollar}[0]{\$}
\title{Superiority of exact quantum automata for promise problems}
\author{Andris Ambainis\thanks{Ambainis was supported by ESF project 1DP/1.1.1.2.0/09/APIA/VIAA/044,
FP7 Marie Curie International Reintegration Grant PIRG02-GA-2007-224886 and FP7 FET-Open project QCS.} \and 
Abuzer Yakary{\i}lmaz\thanks{Yakary{\i}lmaz was partially supported by the Scientific and Technological
Research Council of Turkey (T\"{U}B\.ITAK) with grant 108E142 and FP7 FET-Open project QCS.}}
\institute{University of Latvia, Faculty of Computing, Raina bulv. 19, Riga, LV-1586, Latvia 
\\ ~ \\
\texttt{\{ambainis,abuzer\}@lu.lv}
\\ ~ \\
\today
}
\begin{document}

\maketitle

\pagenumbering{arabic}

\thispagestyle{empty}

\begin{abstract}
In this note, we present an infinite family of promise problems which can be solved exactly
by just tuning transition amplitudes of a two-state quantum finite automata operating in realtime mode,
whereas the size of the corresponding classical automata grow without bound.
\\ \\
{\textbf{Keywords:} exact quantum computation, promise problems, succinctness,
quantum finite automaton, classical finite automaton}
\end{abstract}

\section{Introduction}

The exact quantum computation has been widely examined for partial (promise) and total functions 
(e.g. \cite{BH97,BV97,BHCMW98,BCWZ99,Kl00,BV03,MNYW05,FI09,YFSA10}).
On the other hand, in automata theory, only two results have been obtained:

\begin{enumerate}
\item[(i)] Klauck \cite{Kl00} has shown that realtime quantum finite automata (QFAs) cannot be 
more concise than realtime deterministic finite automata (DFAs) \footnote{The proof was basically given for Kondacs-Watrous realtime QFA model 
\cite{KW97} but it can be extended for any model of realtime QFAs including the most general ones 
\cite{Ci01,BMP03,Hi10,YS11A}.}
in case of language recognition,
\item[(ii)] Murakami et. al. \cite{MNYW05} have shown that there is a promise problem solvable by
quantum pushdown automata but not by any deterministic pushdown automata.
\end{enumerate}

In this note, we consider succinctness of realtime QFAs for promise problems.
We present an infinite family of promise problems which can be solved exactly
by just tuning transition amplitudes of a two-state rtQFAs,
whereas the size of the corresponding classical automata grow without bound.

\section{Background}

Throughout the paper,
\begin{enumerate}
\item[(i)]
$ \Sigma $ denotes the input alphabet not containing left- and right-end markers 
($ \cent $ and $ \dollar $, respectively) and $ \tilde{\Sigma} = \Sigma \cup \{ \cent,\dollar \} $,
\item[(ii)] $ \varepsilon $ is the empty string,
\item[(iii)] $ w_i $ is the $ i^{th} $ symbol of a given string $ w $, and
\item[(iv)] $ \tilde{w} $ represents the string $ \cent w \dollar $, for $ w \in \Sigma^{*} $.
\end{enumerate}
Moreover, all machines presented in the paper operate in realtime mode.
That is, the input head moves one square to the right in each step and the computation stops after reading $ \dollar $.

A promise problem is a pair $ A = (A_{yes},A_{no}) $, 
where $ A_{yes},A_{no} \subseteq \Sigma^{*} $ and $ A_{yes} \cap A_{no} = \emptyset $ \cite{Wa09}.
A promise problem $ A = (A_{yes},A_{no}) $ is solved exactly by a machine $ \mathcal{M} $
if each string in $  A_{yes} $ (resp., $  A_{no} $) is accepted (resp., rejected) exactly by $ \mathcal{M} $.
Note that, if $ \overline{ A_{yes} } =   A_{no} $, 
this is the same as the recognition of a language ($ A_{yes} $).

We give our quantum result with the most restricted of the known QFA model, i.e.
\textit{Moore-Crutchfield quantum finite automaton} (MCQFA) \cite{MC00},
(see \cite{YS11A} for the definition of the most general QFA model).

A MCQFA is a 5-tuple 
\begin{equation*}
	\mathcal{M} = ( Q , \Sigma, \{ U_{\sigma} \mid \sigma \in \tilde{\Sigma} \},q_{1},Q_{a} ),
\end{equation*}
where $ Q $ is the set of states, $ q_{1} $ is the initial state, $ Q_{a} \subseteq Q $ and 
is the set of accepting states,
and $ U_{\sigma} $'s are unitary operators.
The computation of a MCQFA on a given input string $ w \in \Sigma^{*} $ can be traced by a $ |Q| $-dimensional vector.
This vector is initially set to $ \ket{v_{0}} = (1~0~\cdots~0)^{T} $ and evolves according to
\[ 
	\ket{v_{i}} = U_{\tilde{w}_{i}} \ket{v_{i-1}}, ~~~~ 1 \leq i \leq |\tilde{w}|.
\]
At the end of the computation, $ w $ is accepted (resp., rejected) with probability 
$  || P_{a}v_{|\tilde{w}|} ||^{2}  $
(resp., $  || P_{r}v_{|\tilde{w}|} ||^{2}  $),
where $ P_a = \sum_{q \in Q_{a}} \ket{q}\bra{q} $ and $ P_r = I-P_a $.
If we replace the unitary operation with a zero-one left stochastic operator,
we obtains a realtime DFA (which we call simply a DFA).

\section{The main results}

Let $ A_{yes}^{k} = \{ a^{i2^{k}} \mid i \mbox{ is a nonnegative \textit{even} integer} \} $ and
$ A_{no}^{k} = \{ a^{i2^{k}} \mid i \mbox{ is a positive \textit{odd} integer} \} $ be two unary languages, 
where $ k $ is a positive integer.
We will show that a two-state MCQFA can solve promise problem $ A^{k}=( A_{yes}^{k},A_{no}^{k} ) $, but
any DFA (and so any PFA) must have at least $ 2N $ states to solve the same problem exactly.

\begin{theorem}
	Promise problem $ A^{k}=( A_{yes}^{k},A_{no}^{k} ) $ can be solved by a two-state MCQFA $ \mathcal{M}_{k} $ exactly. 
\end{theorem}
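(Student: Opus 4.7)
The plan is to exhibit $\mathcal{M}_k$ as a single-qubit rotation automaton where the rotation angle is chosen so that the two classes of inputs end up in orthogonal basis states.

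First I would set $Q = \{q_1, q_2\}$ with $q_1$ initial, take $Q_a = \{q_1\}$, and declare $U_{\cent} = U_{\dollar} = I$ so that only the letter $a$ drives the computation. Then I would define
\[
  U_a \;=\; \begin{pmatrix} \cos\theta & -\sin\theta \\ \sin\theta & \phantom{-}\cos\theta \end{pmatrix},
  \qquad \theta = \frac{\pi}{2^{k+1}}.
\]
Since $\mathcal{M}_k$ only sees unary inputs, the state after reading $\tilde{w} = \cent a^n \dollar$ is simply $\ket{v_{n+2}} = U_a^n \ket{v_0} = (\cos(n\theta),\, \sin(n\theta))^T$.

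Next I would carry out the two case analyses. If $w \in A_{yes}^k$, then $n = i\cdot 2^k$ for some nonnegative even $i$, i.e.\ $n$ is an integer multiple of $2^{k+1}$, so $n\theta$ is an integer multiple of $\pi$. Hence $\ket{v_{n+2}} = \pm\ket{q_1}$, and the acceptance probability $\|P_a \ket{v_{n+2}}\|^2$ equals $1$. If $w \in A_{no}^k$, then $n = i \cdot 2^k$ with $i$ odd, so $n\theta = i\pi/2$ is an odd multiple of $\pi/2$; thus $\cos(n\theta) = 0$ and $\ket{v_{n+2}} = \pm\ket{q_2}$, giving rejection probability $1$.

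There is essentially no hard step: the argument reduces to verifying the two elementary trigonometric identities above, and the only nontrivial design choice is calibrating $\theta = \pi/2^{k+1}$ so that multiples of $2^{k+1}$ land on the accepting axis while odd multiples of $2^k$ land on the orthogonal (rejecting) axis. The only thing I would be careful about is to note explicitly that unitarity of $U_a$ is automatic from its form as a real rotation, so $\mathcal{M}_k$ is a legitimate MCQFA as defined in Section~2.
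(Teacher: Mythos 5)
Your proposal is correct and is essentially identical to the paper's own proof: same two-state automaton, same rotation angle $\theta = \pi/2^{k+1}$ (the paper writes it as $\pi/(2N)$ with $N = 2^k$), same choice of $Q_a = \{q_1\}$ and identity end-marker operators. The only cosmetic difference is that you verify correctness via the closed form $U_a^n\ket{v_0} = (\cos(n\theta), \sin(n\theta))^T$, whereas the paper traces the orbit block-by-block over each group of $N$ letters; both amount to the same elementary trigonometric check.
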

\begin{proof}
	We will use a well-known technique given in \cite{AF98}.
	Let $ N = 2^{k} $ and 
	$ \mathcal{M}_{k} = ( Q , \Sigma, \{ U_{\sigma} \mid \sigma \in \tilde{\Sigma} \},q_{1},Q_{a}) $,
	where $ Q = \{q_{1},q_{2}\} $, $ \Sigma = \{a\} $, $ Q_{a} = \{q_{1}\} $, 
	$ U_{\cent} = U_{\dollar} = I $, and $ U_{a} $ is a rotation in $ \ket{q_{1}} $-$ \ket{q_{2}} $ plane
	with angle $ \theta = \frac{\pi}{2N} $, i.e.,
	\begin{equation*}
		U_{a} = \left(  \begin{array}{lr}
			\cos \theta & ~-\sin \theta \\
			\sin \theta & \cos \theta
		\end{array}  \right).
	\end{equation*}
	The computation begins with $ \ket{q_{1}} $ and after reading each block of $ N $ $ a $'s,
	the following pattern is followed by $ \mathcal{M}_{k} $:
	\begin{equation*}
		\ket{q_{1}} \overset{a^{N}}{\longrightarrow} 
		\ket{q_{2}} \overset{a^{N}}{\longrightarrow} 
		-\ket{q_{1}} \overset{a^{N}}{\longrightarrow} -\ket{q_{2}}
		\overset{a^{N}}{\longrightarrow} \ket{q_{1}} 
		\overset{a^{N}}{\longrightarrow} \cdots .
	\end{equation*}
	Therefore, it is obvious that $ \mathcal{M}_{k} $ solves promise problem $ A^{k} $ exactly.
\end{proof}
\begin{lemma}
	Any DFA solving $ A^{k}=( A_{yes}^{k},A_{no}^{k} ) $ exactly must have at least $ 2^{k+1} $ states.
\end{lemma}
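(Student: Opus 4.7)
The plan is to exploit the very constrained structure of unary DFAs. Since $\Sigma = \{a\}$, any DFA on this alphabet has a ``lasso'' shape: from the initial state, a tail of length $t \geq 0$ leads into a cycle of length $p \geq 1$, and the total number of states is $n = t + p$. It suffices to prove $p \geq 2^{k+1}$.

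Next I would focus on what happens once the computation has entered the cycle. Pick $j_0$ large enough that $j_0 \cdot 2^k \geq t$; then for every $j \geq j_0$, the state reached after reading $a^{j \cdot 2^k}$ lies in the cycle, and reading one additional block $a^{2^k}$ advances the current cycle position by $2^k \pmod p$. Thus the sequence of cycle positions visited at step $j_0, j_0+1, j_0+2, \dots$ forms an arithmetic progression modulo $p$ with common difference $2^k$, so it is eventually periodic with period exactly $L := p / \gcd(2^k, p)$.

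Now I would use the promise. As $j$ ranges over the nonnegative integers, the strings $a^{j \cdot 2^k}$ alternate between $A_{yes}^k$ (even $j$) and $A_{no}^k$ (odd $j$), so if the DFA solves $A^k$ exactly, the accept/reject labels of the visited states must also alternate as $j$ increases. But the sequence of visited cycle states has period $L$, so the sequence of labels has period dividing $L$. Since a strictly alternating $0/1$ sequence has minimal period $2$, we get $2 \mid L$, i.e., $L$ is even.

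The final step is a short $2$-adic computation. Write $p = 2^a b$ with $b$ odd. Then $\gcd(2^k, p) = 2^{\min(a,k)}$ and
\[
L \;=\; \frac{p}{\gcd(2^k, p)} \;=\; 2^{a - \min(a,k)} \, b.
\]
Because $b$ is odd, $L$ is even iff $a - \min(a,k) \geq 1$, which forces $a \geq k+1$, and therefore $p \geq 2^a \geq 2^{k+1}$. Hence $n \geq p \geq 2^{k+1}$, as claimed. The only real subtlety to handle carefully is the transition from tail to cycle (choosing $j_0$) and the observation that a purely periodic sequence of period $L$ can be strictly alternating only when $L$ is even; everything else is a direct consequence of the lasso structure of unary DFAs.
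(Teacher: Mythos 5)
Your proof is correct, and while it follows the same overall skeleton as the paper's --- restrict attention to the cycle of the unary DFA, extract a divisibility condition on the cycle length from the behaviour on the strings $a^{j2^{k}}$, and finish with a $2$-adic computation showing the cycle length is a multiple of $2^{k+1}$ --- the middle step is genuinely different. The paper explicitly computes the set $S_{a}$ of cycle positions reached at multiples of $2N=2^{k+1}$, shows by counting the distinct residues $i\cdot 2N \bmod t$ that $S_a$ is exactly the set of positions divisible by $\gcd(t,2N)$, and then derives the constraint from the requirement that the (rejecting) position $N \bmod t$ lie outside $S_{a}$, which forces $\gcd(t,N)\neq\gcd(t,2N)$ and hence $2N \mid t$. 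You instead observe that the visited cycle positions form an arithmetic progression whose minimal period is $L=p/\gcd(2^{k},p)$, and that a strictly alternating accept/reject sequence can be $L$-periodic only if $L$ is even; this one-line parity observation replaces the paper's residue counting and is arguably cleaner. The two routes deliver the same arithmetic fact ($2^{k+1}$ divides the cycle length, hence the state count), so they buy the same bound; yours is shorter, the paper's gives as a by-product an explicit description of which cycle states accept. In a final write-up you should only make explicit that the end-markers do not disturb the lasso picture: reading $\cent$ merely fixes where the walk starts, and since the state before $\dollar$ determines acceptance, every cycle state carries a well-defined accept/reject label, which is what your alternation argument needs.
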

\begin{proof}
	Let $ N = 2^{k} $ and $ \mathcal{D} $ be a $ m $-state  DFA solving $ A^{k} $ exactly.
	We show that $ m $ cannot be less than $ 2N $.
	
	Since both $ A_{yes}^{k} $ and $ A_{no}^{k} $ contain infinitely many unary strings,
	there must be a chain of $ t $ states, say $ s_{0}, \ldots, s_{t-1} $ such that,
	for sufficiently long strings, $ \mathcal{D} $ enters this chain in which
	$ \mathcal{D} $ transmits from $ s_{i} $ to $ s_{(i+1 \mod t)} $ when reading an $ a $, 
	where $ 0 \leq i \leq t-1 $ and $ 0 < t \leq m $.
	
	Without lose of generality, we assume that
	$ \mathcal{D} $ accepts the input if it is in $ s_{0} $ before reading $ \dollar $.
	Thus, $ \mathcal{D} $ rejects the input if it is in $ s_{(N \mod t)} $ before reading $ \dollar $.
	Let $ S_{a} $ be the set of $ \{ s_{( i2N \mod t)}  \mid i \geq 0 \} $.
	Then, $ \mathcal{D} $ accepts the input if it is in one of the states in $ S_{a} $ before reading $ \dollar $.
	Note that $ s_{(N \mod t)} \notin S_a $.
	
	Let $ d = \gcd(t,2N) $, $ t' = \frac{t}{d} $, and $ S' $ be the set $ \{s_{id} \mid 0 \leq i < t' \} $.
	Since $ S_{a} \subseteq S' $ and  $ |S'| = t' $,
	we can easily follow $ S_{a} = S' $ if we show $ |S_{a}| \geq t' $.

	Firstly, we show that each $ i $ satisfying $ (i2N \equiv 0 \mod t) $ must be a multiple of $ t' $:
	For such an $ i $, there exists a $ j $ such that  $ i2N = jt $.
	By dividing both sides with $ t = dt' $, we get
	$ \frac{i}{t'}\frac{2N}{d} = j $. This implies that $ i $ must be a multiple of $ t' $
	since left side must be an integer and $ gcd(t',2N) = 1 $.
	
	Secondly, we show that there is no $  i_{1} $ and $ i_{2} $, i.e.
	$ t' >  i_{1} > i_{2} \geq 0 $, such that $ (i_{1}2N \equiv i_{2}2N \mod t) $.
	If so, we have $ (i_{1}2N - i_{2}2N \equiv 0 \mod t) $ and then
	$ ((i_{1}-i_{2})2N \equiv 0  \mod t) $. This implies that $ (i_{1}-i_{2}) $
	must be a multiple of $ t' $. This is a contradiction.

	Thus, for each $ i \in \{0,\ldots,t'-1\} $, we obtain a different value of $ (i2N \mod t) $
	and so $ |S_{a}| $ contains at least $ t' $ elements.
	
	If $ \gcd(t,N) = d $, then $ s_{(N \mod t)} $ also becomes a member of $ S_{a} $.
	Therefore, $ \gcd(t,N) $ must be different than $ \gcd(t,2N) $.
	This can only be possible whenever $ t $ is a multiple of $ 2N $.
	Therefore, $ m $ cannot be less than $ 2N $.
\end{proof}

Since a $ 2^{k+1} $-state DFA solving promise problem $ A^{k} $ exactly can be constructed in a straightforward way,
we obtain the following theorem.

\begin{theorem}
	The minimal DFA solving the promise problem $ A^{k}=( A_{yes}^{k},A_{no}^{k} ) $ exactly has $ 2^{k+1} $ states.
\end{theorem}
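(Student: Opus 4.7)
The plan is to observe that the preceding lemma already hands us the lower bound $m \geq 2^{k+1}$, so the theorem will follow as soon as I exhibit a single DFA with exactly $2^{k+1}$ states that solves $A^k$ exactly. Conveniently, the lemma's analysis tells me what the minimal machine should look like: a cycle whose length is a multiple of $2N$, with one designated accepting state on that cycle. I will just realize the shortest such cycle directly.

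Concretely, set $N = 2^k$ and define $\mathcal{D}_k$ with state set $\{s_0, s_1, \ldots, s_{2N-1}\}$, initial state $s_0$, accepting set $Q_a = \{s_0\}$, identity transitions on the end-markers $\cent$ and $\dollar$, and the $a$-transition $s_i \mapsto s_{(i+1) \bmod 2N}$. Verifying correctness is a one-line modular arithmetic check: on input $a^{iN}$ the DFA halts in state $s_{(iN) \bmod 2N}$, which is $s_0$ when $i$ is even and $s_N$ when $i$ is odd. Hence every string of $A_{yes}^k$ is accepted and every string of $A_{no}^k$ is rejected, as required.

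I do not expect any genuine obstacle here; the construction is essentially forced by the lemma. The only point worth flagging is that, because $A^k$ is a \emph{promise} problem, the behavior of $\mathcal{D}_k$ on inputs $a^j$ with $j \not\equiv 0 \pmod{N}$ is unconstrained, so we are free to let those inputs land in whatever cycle state they land in without having to enlarge $Q_a$ or the state set. Combining this $2^{k+1}$-state construction with the lower bound from the lemma then yields the claimed exact bound and completes the theorem.
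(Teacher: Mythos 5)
Your proposal is correct and matches the paper's route exactly: the paper also derives the lower bound from the preceding lemma and then simply asserts that a $2^{k+1}$-state DFA "can be constructed in a straightforward way," which is precisely the cyclic counter of length $2N$ with accepting state $s_0$ that you write out and verify. Your version just makes explicit the construction the paper leaves implicit.
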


\section{Concluding remarks}

In this paper, we identify a case in which the superiority of quantum computation to classical one cannot be bounded.
For this purpose, we use an infinite family of two unary disjoint languages containing the strings of the form 
$ (a^{2n})^{*} $ and $ a^{n}(a^{2n})^{*} $, respectively, where $ n $ is a power of $ 2 $.

What happens if $ n $ is not an exact power of 2?
For quantum case, we can still solve the same problem with 2 states.
On the other hand, for the classical case, the minimum number of states is determined by the biggest factor of the number,
which is a power of 2.
Let $ k,l > 0 $.
Let $N = 2^{k}(2l+1) $
and $ A^{N} = (A_{yes}^{N},A_{no}^{N}) $ 
(where $ A_{yes}^{N} = \{ a^{iN} \mid i \mbox{ is a nonnegative \textit{even} integer} \} $ and
$ A_{no}^{N} = \{ a^{iN} \mid i \mbox{ is a positive \textit{odd} integer} \} $) be a promise problem.

\begin{corollary}
	The minimal DFA solving promise problem $ A^{N} = (A_{yes}^{N},A_{no}^{N}) $ exactly has $ 2^{k+1} $ states.
	\footnote{The proof can be obtained by using almost the same technique given in Section 2.}	
\end{corollary}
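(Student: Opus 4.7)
The plan is to prove both directions: a $2^{k+1}$-state DFA suffices, and no smaller DFA works.

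For the upper bound, I would exhibit the obvious single-cycle DFA of length $2^{k+1}$: states $s_0, \ldots, s_{2^{k+1}-1}$, initial state $s_0$, unique accepting state $s_0$, and transition $s_i \xrightarrow{a} s_{(i+1) \bmod 2^{k+1}}$. After reading $a^{iN}$ the machine sits in state $s_j$ with $j \equiv iN \pmod{2^{k+1}}$. Since $N = 2^k(2l+1)$ and $2l+1$ is odd, $iN \equiv 2^k\,(i \bmod 2) \pmod{2^{k+1}}$, so every yes-instance lands in $s_0$ and every no-instance lands in $s_{2^k}$; thus the DFA solves $A^N$ exactly.

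For the lower bound, I would reuse the proof of Lemma 1 almost verbatim. A unary DFA with $m$ states eventually enters a simple cycle of length $t \le m$, and since both $A^N_{yes}$ and $A^N_{no}$ are infinite, all sufficiently long promise strings are processed inside this cycle. Labelling it $s_0, \ldots, s_{t-1}$ with $s_0$ reached by yes-instances, the same counting argument as in the lemma — set $d = \gcd(t,2N)$ and $t' = t/d$, then show that the residues $i \cdot 2N \bmod t$ for $0 \le i < t'$ give $t'$ distinct multiples of $d$ — produces the accepting set $S_a = \{s_{id} : 0 \le i < t'\}$, and the requirement $s_{N \bmod t} \notin S_a$ becomes $\gcd(t,N) \ne \gcd(t,2N)$.

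The only step that genuinely changes is this final gcd comparison, because $N$ now carries the odd factor $2l+1$. Writing $t = 2^a b$ with $b$ odd, one has $\gcd(t,N) = 2^{\min(a,k)} \gcd(b, 2l+1)$ and $\gcd(t,2N) = 2^{\min(a,k+1)} \gcd(b, 2l+1)$; the shared odd factor $\gcd(b, 2l+1)$ cancels, so the inequality reduces to $\min(a,k) < \min(a,k+1)$, i.e., $a \ge k+1$, i.e., $2^{k+1} \mid t$. Hence $m \ge t \ge 2^{k+1}$. The only real obstacle is noticing this cancellation: it shows that the lower bound depends solely on the $2$-adic valuation of $N$ and not on its odd part, so it matches the upper bound exactly and gives $2^{k+1}$ rather than the naive $2N$.
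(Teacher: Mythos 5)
Your proposal is correct and follows exactly the route the paper intends: the footnote defers to the technique of Lemma~1, and your argument reproduces that cycle-counting proof, correctly isolating the one step that changes (the comparison $\gcd(t,N)\ne\gcd(t,2N)$, where the odd factor $\gcd(b,2l+1)$ cancels so that only the $2$-adic valuation of $N$ matters, forcing $2^{k+1}\mid t$). The explicit $2^{k+1}$-state single-cycle upper bound is likewise the straightforward construction the paper alludes to, so there is nothing to add.
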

\noindent
Therefore, if $ N $ is an odd integer, a DFA only needs 2 states to solve the related promise problems.

\bibliographystyle{alpha}
\bibliography{YakaryilmazSay}

\end{document}